\def\01{\{0,1\}}
\newcommand{\ceil}[1]{\lceil{#1}\rceil}
\newcommand{\eps}{\varepsilon}
\newcommand{\ket}[1]{|#1\rangle}
\newcommand{\bra}[1]{\langle#1|}
\newcommand{\ketbra}[2]{|#1\rangle\langle#2|}
\newcommand{\braket}[2]{\langle#1|#2\rangle} 
\newcommand{\Tr}{\mbox{\rm Tr}}
\newcommand{\norm}[1]{\mbox{$\parallel{#1}\parallel$}}
\newtheorem{theorem}{Theorem}
\newtheorem{corollary}[theorem]{Corollary}
\newcommand{\Prob}{p}
\begin{document}

\title{Lightweight Detection of a Small Number of Large Errors in a Quantum Circuit}
	\author{Noah Linden}
	\affiliation{School of Mathematics, University of Bristol.  {\tt n.linden@bristol.ac.uk}}
	\author{Ronald de Wolf}
	\affiliation{QuSoft, CWI and University of Amsterdam, the Netherlands.  {\tt rdewolf@cwi.nl}}	
	
	\date{}
	\maketitle

\begin{abstract}
Suppose we want to implement a unitary $U$, for instance a circuit for some quantum algorithm.
Suppose our actual implementation is a unitary $\tilde{U}$, which we can only apply as a black-box. In general it is an exponentially-hard task to decide whether $\tilde{U}$ equals the intended~$U$, or is significantly different in a worst-case norm.
In this paper we consider two special cases where relatively efficient and lightweight procedures exist for this task. 

First, we give an efficient procedure under the assumption that $U$ and $\tilde{U}$ (both of which we can now apply as a black-box) are either equal, or differ significantly in only \emph{one} $k$-qubit gate, where $k=O(1)$ (the $k$ qubits need not be contiguous). Second, we give an even more lightweight procedure under the assumption that $U$ and $\tilde{U}$ are \emph{Clifford} circuits which are either equal, or different in arbitrary ways (the specification of $U$ is now classically given while $\tilde{U}$ can still only be applied as a black-box). Both procedures only need to run $\tilde{U}$ a constant number of times to detect a constant error in a worst-case norm. We note that the Clifford result also follows from earlier work of Flammia and Liu~\cite{flammia&liu:fidelityestimation}
and da Silva, Landon-Cardinal, and Poulin~\cite{SLP:practical}.

In the Clifford case, our error-detection procedure also allows us efficiently to \emph{learn} (and hence correct) $\tilde{U}$ if we have a small list of possible errors that could have happened to $U$; for example if we know that only $O(1)$ of the gates of $\tilde{U}$ are wrong, this list will be polynomially small and we can test each possible erroneous version of $U$ for equality with $\tilde{U}$.
\end{abstract}

\setcounter{page}{1}

\section{Introduction}

With the first tentative steps for implementing quantum computations on larger numbers of qubits (53 qubits in the case of Google's quantum supremacy experiment~\cite{google:supremacy}) comes the need to \emph{verify} whether those implementations actually work as intended. In contrast to classical computations, we cannot just ``open up'' the computer midway through the computation to check whether everything is still on track and then allow the computation to continue, because measurements on the intermediate quantum state typically destroy the  superposition; and learning the quantum state takes exponential effort in the number of qubits in general. Similarly, simulation of  a general $n$-qubit quantum circuit to determine what the intended output should be on a given input state, becomes infeasible if $n\gg 50$.

Even reasonably-well implemented circuits of simple quantum operations (``gates'') can still be marred by many different types of errors: a few large errors (where a gate or qubit is totally wrong or even absent), or many smallish errors (for example slight overrotations), or some combination of both. Strategies are needed to deal with these. In the long run, when we have sufficiently many physical qubits available to encode our logical qubits by error-correcting codes, such faults could in principle all be dealt with by the machinery of fault-tolerant quantum computing. In particular, the ``threshold theorem''~\cite{aharonov&benor:faulttolj} says that arbitrarily long fault-tolerant quantum computing is possible with low overhead, assuming the fault-rate per qubit per time-step is a sufficiently small constant and the errors are not too correlated. But even here, there could be errors due to the mis-specification of the programme to be run.  In the near- to medium-term future we will not have sufficiently many qubits available to do fault-tolerant computing, and we need more ``lightweight'' methods to verify (and hopefully correct) quantum circuits. By lightweight we mean that the verification procedures should not use very complicated quantum operations beyond running $\tilde{U}$ as a black-box, and should need only polynomial (ideally only linear) additional classical effort in the number of qubits and gates of the tested circuits.

We are interested in this paper with testing the full computation, thought of as a black-box, and testing its behaviour on an arbitrary input, not just the all-zeros state (as is important, for example, if the circuit is to be applied as a subroutine within a larger computation). Accordingly, the verification procedure should test for closeness of the ideal circuit~$U$ and the actually implemented circuit~$\tilde{U}$ in a \emph{worst-case norm}.

Let us first discuss what specific norm is appropriate to measure distance between unitaries $U$ and~$\tilde{U}$
(see the survey \cite[Section~5.1]{montanaro&wolf:proptest} and references therein for a more extensive discussion).
When measuring distance between two states, $\ket{\phi}$ and $\ket{\psi}$, the canonical distance measure is the \emph{trace distance}, which is defined as half the difference in Schatten-1 norm between the corresponding density matrices:
\[
D(\ket{\phi},\ket{\psi})=\frac{1}{2}\norm{\ketbra{\phi}{\phi}-\ketbra{\psi}{\psi}}_1.
\]
The trace distance gives exactly the maximal total variation distance difference between the probability distributions obtained from $\ket{\phi}$ and $\ket{\psi}$, respectively, maximized over all possible measurements.
The trace distance between $\ket{\phi}$ and $\ket{\psi}$ turns out to be equal to 
\[
D(\ket{\phi},\ket{\psi})=\sqrt{1-|\braket{\phi}{\psi}|^2}.
\]
This $D(\ket{\phi},\ket{\psi})$ satisfies the triangle inequality, but is not a distance in the strictest sense of the word, because $\ket{\phi}$ and $-\ket{\phi}$ have distance~0 even though they're not equal. This is, however, as it should be, because such global-phase differences have no physical significance.

When comparing different unitaries $U$ and $\tilde{U}$ in the worst case, it is natural to maximize the trace distance between $U\ket{\phi}$ and $\tilde{U}\ket{\phi}$ over all $\ket{\phi}$. This gives the following distance:
\[
D^{max}(U,\tilde{U})=\max_{\ket{\phi}}D(U\ket{\phi},\tilde{U}\ket{\phi})=\max_{\ket{\phi}}\sqrt{1-|\bra{\phi}U^\dagger\tilde{U}\ket{\phi}|^2}.
\]
This is actually the special case of the diamond-norm distance, restricted to the case of unitaries.\footnote{For computing the diamond-norm distance between unitaries, the usual ``stabilization'' by tensoring with identity is not needed; see~\cite[end of Section~5.3]{akn:mixed}, or \cite[Theorem~3.55]{watrous:qit} for a more general statement. This also implies that an implementation~$\tilde{U}$, once verified to have small or no distance from the intended~$U$, can also be applied reliably to a subset of $n$ qubits within a larger computation.} 
Similarly to the trace distance, this distance cannot ``see'' the difference in global phase between $U$ and $e^{i\theta}U$ (unless we can turn the global phase into a relative phase by conditional operations).\footnote{This example also shows why operator-norm difference is not the right worst-case distance measure here: $\norm{U-\tilde{U}}=2$ if $\tilde{U}=-U$, even though $U$ and $-U$ are indistinguishable.}

Detecting the difference between $U$ and $\tilde{U}$ as measured by $D^{max}$ is like finding a needle in a haystack: two $n$-qubit unitaries may have large $D^{max}$-distance  while being equal on all but one of the elements in some particular $2^n$-element basis. The difference would only show up in one out of $2^n$ possible ``directions''. Consider the example where the ideal unitary~$U$ is the $n$-qubit identity and the actual implementation~$\tilde{U}$ is identity with one of the $2^n$ diagonal entries negated; here $D^{max}(U,\tilde{U})$ is large (equal to~1), yet the well-known lower bound for quantum search~\cite{bbbv:str&weak} implies that $\Omega(\sqrt{2^n})$ black-box applications of $\tilde{U}$ are necessary in order to detect the difference from identity with constant probability.
And in a complexity-theoretic context, where the unitaries are not given as a black-box but as explicit polynomial-size quantum circuits, deciding whether $D^{max}(U,\tilde{U})$ is close to~0 or close to~1 is known to be QIP-complete~\cite[Theorem~13]{watrous:qcomplexitysurvey}.
Still, some non-trivial verification can be done in special cases without doing an exponential amount of work, and that is the topic of this paper.

We will consider two types of $U,\tilde{U}$ in the following subsections: (1) arbitrary unitaries, which we can think of as (possibly  very large) circuits over an arbitrary universal set of gates, for instance $\{H,T,\mbox{CNOT}\}$. Here we will be able efficiently to detect large $D^{max}$-distance if $U$ and $\tilde{U}$ differ in only one $k$-qubit gate, with $k=O(1)$. And (2) unitaries corresponding to \emph{Clifford} circuits.
Here we will be able to efficiently detect difference between any two Clifford circuits $U$ and~$\tilde{U}$.
In both cases our procedures only need to run the circuits a \emph{constant} number of times in order to detect a constant distance in worst-case norm.
In case (2), if the number of faulty gates in our Clifford circuit is $O(1)$, then we can actually find what those errors are in polynomial time.

\subsection{Circuits over a universal gate set}

Suppose we want to test whether two $n$-qubit unitaries $U$ and $\tilde{U}$ over an arbitrary gate-set are equal or not. We can apply these unitaries as a black-box, but cannot look inside them. For example, we can think of $U$ as corresponding to an implementation of some $s$-gate quantum circuit on a chip, which for whatever reason we already know to be a correct implementation. $\tilde{U}$ is another chip that has just come off the production line and that is supposed to equal~$U$, but that may or may not be different (faulty) in one or more of the $s$ elementary gates. We want to test whether $U$ and $\tilde{U}$ are either equal, or far in $D^{max}$-distance.

In Section~\ref{ssecusingseparately} we describe a well-known test that compares $U$ and $\tilde{U}$ by effectively comparing their ``Choi states''. By running $U\otimes I$ on the first half of $n$ EPR-pairs, running $\tilde{U}\otimes I$ on the first half of another batch of $n$ EPR-pairs, and comparing the two resulting $2n$-qubit states\footnote{This protocol has been used in various places, and has recently even been implemented on a small quantum computer~\cite{KLPCSC:qcompiling}.
\label{jonasfootnote}Jonas Helsen (personal communication) noted that one can also  do something similar by applying $U$ and $\tilde{U}$ each to their own copy of the same Haar-random $n$-qubit state. This saves half the qubits, but generating two copies of the same Haar-random state (or something pseudo-random like running a 2-design) is not so lightweight. The idea is somewhat similar
to the experimentally motivated method of Elben et al.~\cite{elbenea:verification} to test whether two separate experimental implementations (possibly on very different types of physical hardware platforms) produce approximately the same state by applying the same random product unitary to both and then measuring in the standard basis.}
with a swap-test, we obtain a test with acceptance probability given by
\begin{equation}\label{eq:choitestprob}
\Prob=\frac{1}{2}D(U,\tilde{U})^2,
\end{equation}
where $D$ is an ``average-case'' distance measure defined by:\footnote{\label{note:entfidelity}The quantity $|\frac{1}{2^n}\Tr(U^\dagger\tilde{U})|^2$ is sometimes called the ``entanglement fidelity'' (see~\cite{horodecki:general,nielsen:simpleformula}) between the quantum channels associated to the unitaries $U$ and $\tilde{U}$, and 1 minus the entanglement fidelity (also known as the ``entanglement \emph{in}fidelity'') is the square of our $D(U,\tilde{U})$.
Note that Eq.~(\ref{eq:choitestprob}) also allows to \emph{estimate} this (in)fidelity by repeating the test to estimate~$p$.
This is the key to the results of Flammia and Liu~\cite{flammia&liu:fidelityestimation} discussed in Section~\ref{ssec:flammialiu}.}
\[
D(U,\tilde{U})=\sqrt{1-\left|\frac{1}{2^n}\Tr(U^\dagger\tilde{U})\right|^2}.
\]
The following equality justifies calling $D(U,\tilde{U})$ an ``average case''~\cite[Proposition~21]{montanaro&wolf:proptest}:
\[
D(U,\tilde{U})^2=\frac{2^n+1}{2^n}\int D(U\ket{\phi},\tilde{U}\ket{\phi})^2\,d\phi,
\]
where the integral is according to Haar measure, and $(2^n+1)/2^n$ is very close to~1 already for small~$n$.
Hence our test is sensitive to a difference in trace distance in an \emph{average} direction.\footnote{This formula shows that if $D(U,\tilde{U})$ is lower bounded by a constant, and one chooses $\ket{\phi}$ according to Haar measure (or some 2-design), then with constant probability $D(U\ket{\phi},\tilde{U}\ket{\phi})$ is lower bounded by a (smaller) constant as well. This means we can detect differences between two quantum circuits in $D$-norm by comparing the states $U\ket{\phi}$ and $\tilde{U}\ket{\phi}$ for a randomly chosen $\ket{\phi}$. This observation was used recently in a circuit-verification scheme under the assumption that one can classically simulate and compare the states $U\ket{\phi}$ and $\tilde{U}\ket{\phi}$~\cite{ABKW:stimuli}.}
That is of course much weaker than we want, because $U$ and $\tilde{U}$ can have large distance $D^{max}(U,\tilde{U})$, even when the detection probability of Eq.~(\ref{eq:choitestprob}) is exponentially close to~0. 
However, we show that if $U$ and $\tilde{U}$ differ in only one gate on $k=O(1)$ qubits (in the case where our circuit has some fixed spatial geometry: these qubits need not be contiguous), then the $D^{max}$ and $D$ distances are closely related, and one is large iff the other is large. This gives a relatively lightweight procedure to compare two black-box circuits that differ in at most one $k$-qubit gate. Note that the procedure does not tell us what or where the erroneous gate is.

This really concerns one extreme end of the spectrum of possible ways in which a circuit can fail: the relatively simple situation where \emph{one} $k$-qubit gate is significantly wrong (the $k=O(1)$ qubits need not be contiguous, and the $k$-qubit gate that is wrong could be built up from multiple elementary gates, some of which may be wrong), while the other gates in the circuit are essentially perfect. The picture we have in mind is analogous to a chip, where bits or qubits are led through a physical circuit, on which each gate has its own location. This setting does not really correspond to the current proposals for implementing quantum circuits on superconducting or ion-trap hardware, where typically many of the gates can be slightly faulty, and gradual deterioration is going on all over the place. However, our picture could correspond to optical implementations of quantum computers, where the optical set-up implementing a circuit on fly-by photonic qubits has one erroneous location, while everything else works essentially as intended. It could also correspond to the situation where we have a classical program driving near-perfect quantum hardware, where the \emph{classical} program has one erroneous instruction somewhere, leading to one gate not doing what it's supposed to do (near-perfect quantum hardware that receives the wrong instructions still fails).

As an application, our test can be used to winnow out the faulty circuits from a production line where each circuit has small probability $f$ of having one faulty gate. Using our test we can reduce the fraction of faulty circuits from~$f$ to anything we want (see Section~\ref{ssec:reducingfaultrate}).

What if $\tilde{U}$ has more than one faulty gate compared to~$U$?
One would expect two errors to be no harder to detect than one error. Unfortunately, as we show in Section~\ref{ssec:twofaults}, there are cases where $U$ and $\tilde{U}$ differ significantly in two 1-qubit gates and have large $D^{max}(U,\tilde{U})$-distance, yet the two errors conspire to make $D(U,\tilde{U})$ (and hence the detection probability of our test)  exponentially small.

Our test for arbitrary gate tests assumes the ability to create $2n$ EPR-pairs, to maintain coherence between the two halves of the EPR-pairs during the run of the circuits, and to apply a swap-test to two $2n$-qubit gates. This is reasonably lightweight but not quite as lightweight as we would like our test to be.

\subsection{Clifford circuits}
In order to enable more lightweight testing, we then turn our attention to a specific gate-set. Clifford circuits use the gate-set consisting of the Pauli matrices: 
$$
I=\left(\begin{array}{cc}
1 & 0\\
0 & 1\\
\end{array}\right),~~
X=\left(\begin{array}{cc}
0 & 1\\
1 & 0\\
\end{array}\right),~~
Y=\left(\begin{array}{cc}
0 & -i\\
i & 0\\
\end{array}\right),~~
Z=\left(\begin{array}{cc}
1 & 0\\
0 & 1\\
\end{array}\right),
$$
Hadamard $H$, phase gate $S$, and CNOT.
This gate-set is not universal; it becomes universal when adding for instance a $T$-gate or when we start with certain ``magic states'' as part of our initial state and allow classical conditioning on the outcomes of intermediate one-qubit measurements (using Clifford gates we can then implement a $T$-gate).

We will consider the situation where we would like to implement a Clifford circuit $U$, of which we now have a \emph{classical} description. We also have an implementation of a (possibly different) Clifford circuit $\tilde{U}$ that we can run as a black-box. In Section~\ref{sec:clifford} we give a relatively lightweight procedure for testing (with success probability close to~1) whether $U=\tilde{U}$ or not, which only uses $O(1)$ runs of the black-box circuit $\tilde{U}$ together with single-qubit state preparations at the start, and single-qubit measurements at the end.
In fact, even with \emph{one} run of $\tilde{U}$ we already have probability $\geq 1/4$ of detecting a difference between $U$ and~$\tilde{U}$. This also means that our test still works if the errors are different in each run (i.e., if $\tilde{U}$ is a different erroneous Clifford in different runs).

The reason we can have such a lightweight procedure for testing Clifford circuits, is that such circuits correspond to linear  maps of the set of $n$-qubit Pauli matrices to itself (up to an overall phase $\pm 1$), and that two different such maps actually differ on at least half of the $4^n$ $n$-qubit Paulis. Our test thus selects an $n$-qubit Pauli at random, and indirectly checks (by  appropriate  single-qubit measurements on the state obtained by running $\tilde{U}$ on an appropriate product state) whether $\tilde{U}$ transforms that Pauli as $U$ would have done.
This test is inspired by a test due to Richard Jozsa~\cite{jozsa:cliffordnote}, which however uses $O(n)$ runs of $\tilde{U}$ rather than our $O(1)$ runs.  

In contrast to the procedure of the previous section, 
this test can distinguish any two different Cliffords, and we do not need to make any assumptions about $U$ and $\tilde{U}$ differing in only one gate.
However, if we \emph{do} additionally assume that $U$ and $\tilde{U}$ differ in at most one gate, then we can not only detect the presence of an error, but even find what it is. More generally, if we can compute from $U$ any small list of candidate circuits that is promised to contain $\tilde{U}$, then we can use our test from this section to identify $\tilde{U}$ by running over all $U'$ in our list and testing whether $U'=\tilde{U}$.\footnote{Note that we are not actually learning the specific circuit-implementation for $\tilde{U}$, but only a Clifford circuit for the same unitary~$\tilde{U}$. This is unavoidable because there are many different Clifford circuits implementing the same unitary~$\tilde{U}$, and black-box runs of $\tilde{U}$ cannot see the difference between these different implementations.}  
For example, if we know that the implemented circuit $\tilde{U}$ was obtained from the ideal specification $U$ by $O(1)$ gates that were replaced by other gates, then this list has a size that is only polynomial in the number of qubits and gates of~$U$.  Having found $\tilde{U}$, we have learnt the error(s), which hopefully enables us to correct it (them).

\paragraph{Remark.}
After we finished our section on the above test for equivalence of Clifford circuits, we discovered that this result also follows from earlier work of Flammia and Liu~\cite{flammia&liu:fidelityestimation} and da Silva, Landon-Cardinal, and Poulin~\cite{SLP:practical} about estimating the fidelity between quantum states and between quantum channels, together with the additional observation that distinct Cliffords have noticeably large $D$-distance (equivalently, small entanglement fidelity).
We give the details in Section~\ref{ssec:flammialiu}.

\subsection{Related work}

With the development of medium-size quantum computers, verification of their properties is receiving more and more attention. Here we will mention some of the main approaches and results, referring to the  recent survey \cite{eisertetal:certificationsurvey} and the many references therein for more. 

From a theoretical standpoint, an important recent result is Mahadev's verification protocol~\cite{mahadev:clasveri}. This sits at the end of a long line of works in the area of blind quantum computation~\cite{AS:blindqc,BFK:blindqc} where a single verifier (who should be as efficient and as classical as possible) checks a quantum computation by interacting with one or more polynomial-time quantum provers. 
Mahadev's protocol allows a purely-classical polynomial-time machine to verify the computation of a polynomial-time quantum machine (under reasonable cryptographic assumptions). However, even though everything in Mahadev's protocol is polynomial, and hence ``efficient'' from a theoretical perspective, in practice the protocol is anything but lightweight: it leads to very significant overheads on the side of the quantum computation, and several rounds of communication between the quantum computer and the classical verifier. It is also designed to test the computation starting from a fixed initial state, so does not test according to a worst-case distance measure. Mahadev's 4-message protocol was subsequently improved to a non-interactive and zero-knowledge protocol in~\cite{ACGH:verification}, but that improved protocol is not very lightweight either. 

A much more bottom-up approach to verification is to test the building blocks of the quantum algorithm: the elementary gates. There have been some positive results on testing universal sets of quantum gates, for instance~\cite{dmms:selftestj}.
However, testing gates in isolation is not enough to verify their behavior in the context of a larger circuit, where the surrounding components may adversely affect gates that would have worked fine in isolation. Randomized benchmarking~\cite{EAZ:scalable,DCEL:2designs,PRYSB:randomized} is an approach to test sequences of gates: roughly speaking one runs a random sequence of gates from a fixed gate set (often restricted to Clifford circuits on a small number of qubits) followed by their inverse, and then tests to what extent the resulting operation is the identity, as it should be. This approach beautifully isolates the average entanglement fidelity of the gates (see footnote~\ref{note:entfidelity}), from state preparation and measurement (``SPAM'') errors.
Note that the average entanglement fidelity of the gates, which is what randomized benchmarking tries to measure, is an average-case measure that may or may not give information about the worst-case errors of these gates~\cite{wallman:errorrates,KLDF:comparing}.

Closer to the second part of this paper is the work of Low~\cite{low:clifford}, who studied efficient testing and even identifying (learning) of Clifford circuits. He showed how to fully learn an unknown Clifford circuit $U$ using $O(n)$ runs of $U$ and $U^\dagger$, but assuming the ability to run $U^\dagger$ is a  stronger assumption than we are willing to entertain here.  
Low also points (at the end of his Section III.B) to work of Harrow and Winter~\cite{harrow&winter:howmany} which implies that $O(n^2)$ runs of $U$ suffice to learn it (without using $U^\dagger$), but their work is information-theoretic in nature and assumes the ability to do complicated joint measurements on the $O(n^2)$ output-states of the runs of~$U$.
The general philosophy we espouse here (looking for lightweight schemes) is also embodied in the verification protocol described by Jozsa and Strelchuk in~\cite{jozsa&strelchuk:verification}.
Last but not least, we already mentioned the very related work of Flammia and Liu~\cite{flammia&liu:fidelityestimation}
 and~\cite{SLP:practical}, which we discuss in Section~\ref{ssec:flammialiu}.

\section{Testing circuits over an arbitrary gate set}\label{sec:generalgates}

\subsection{Using the two circuits separately}\label{ssecusingseparately}

In this section we study the situation where we have two $s$-gate quantum circuits, $U$ and $\tilde{U}$, over an arbitrary set of one- and two-qubit gates. We can run these in a black-box fashion and want to test whether they are either equal, or substantially different in operator norm. We will give a relatively lightweight test that works if $U$ and $\tilde{U}$ differ in at most one gate.

We start by reminding the reader of a simple test that is sensitive to average-case distance between $U$ and $\tilde{U}$ (see~\cite[Section~5.1.3]{montanaro&wolf:proptest} and references therein).
We will assume it is possible to create a maximally entangled state on $2n$ qubits; a simple circuit that starts from $\ket{0^{2n}}$ and applies $n$ Hadamard gates and $n$ CNOTs will do this. We also assume we can do controlled-swap gates. Such 3-qubit gates are not quite as lightweight as we'd ideally like to be, but still much lighter than universal quantum computation.

Now consider the following test:
\begin{enumerate}
\item Run $U\otimes I$ on a $2n$-qubit maximally entangled state to produce state $\ket{\psi_U}$.
\item Run $\tilde{U}\otimes I$ on another $2n$-qubit maximally entangled state to produce state $\ket{\psi_{\tilde{U}}}$.
\item Run a swap-test on $\ket{\psi_U}$ and $\ket{\psi_{\tilde{U}}}$ and output the measured bit.\footnote{The swap-test~\cite{bcww:fp} starts with an auxiliary qubit in the $\ket{+}$ state, swaps the two registers conditioned on the auxiliary qubit, and then measures the auxiliary qubit in the Hadamard basis.}
\end{enumerate}
This test uses $O(n+s)$ gates.
It is easy to calculate the probability that the test outputs~1:
\[
\Prob=\frac{1}{2}-\frac{1}{2}\left|\braket{\psi_U}{\psi_{\tilde{U}}}\right|^2,
\]
and that
\[
\braket{\psi_U}{\psi_{\tilde{U}}} =  \frac{1}{2^n}\Tr(U^\dagger \tilde{U}).
\]
This gives a relation between $\Prob$ and the average-case distance $D(U,
\tilde{U})=\sqrt{1-|\frac{1}{2^n}\Tr(U^\dagger \tilde{U})|^2}$ defined in the introduction:
\[
\Prob=\frac{1}{2}D(U,\tilde{U})^2.
\]
If $U$ and $\tilde{U}$ are equal (up to global phase) then $p$ will be 0, and otherwise $p$ will be positive.
Measurement outcome~1 thus tells us that $U$ and $\tilde{U}$ are different (by more than a global phase). The detection probability is large iff $D(U,\tilde{U})$ is large.  
This test will therefore be useful, for example, if $\tilde{U}$ were a version of $U$ hit by random errors, because random errors tend to create deviations in many ``directions'' simultaneously and hence give a non-negligible distance~$D(U,\tilde{U})$.
However, our main focus here is to design a test that is sensitive to the \emph{worst-case} distance $D^{max}(U,\tilde{U})$, because if that distance is small, then $U$ and $\tilde{U}$ produce approximately the same states no matter what initial state they are applied to.
In general the relation between these worst-case and average-case norms is fairly weak. For example, if $U=I$ and $\tilde{U}$ has one of its diagonal entries set to $-1$, then $D^{max}(U,\tilde{U})=1$ but $D(U,\tilde{U})=\sqrt{4/2^n-4/2^{2n}}$ is exponentially small. The above test will thus have exponentially small probability of detecting the large $D^{max}$-distance in this case.

We now show that at least the gap cannot be much more than in the previous example.

\begin{theorem}\label{th:DvsDmax}
If $U$ and $\tilde{U}$ are $n$-qubit unitaries, then 
$D^{max}(U,\tilde{U})\leq 2^{(n+1)/2} D(U,\tilde{U})$.
\end{theorem}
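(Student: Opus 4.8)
The plan is to diagonalise $M=U^\dagger\tilde U$ (a unitary, hence normal) and translate both distance measures into statements about the eigenvalue phases $e^{i\theta_1},\dots,e^{i\theta_N}$ of $M$, where $N=2^n$. The first step is to record two reformulations. Writing $\ket{\phi}=\sum_j c_j\ket{j}$ in an eigenbasis of $M$ gives $\langle\phi|M|\phi\rangle=\sum_j|c_j|^2 e^{i\theta_j}$, so as $\ket{\phi}$ ranges over all unit vectors this quantity ranges exactly over the convex hull $C$ of the eigenvalues (the numerical range of a normal matrix). Since $D^{max}(U,\tilde U)=\max_{\ket\phi}\sqrt{1-|\langle\phi|M|\phi\rangle|^2}$, this yields $D^{max}(U,\tilde U)^2=1-r^2$ with $r=\min_{z\in C}|z|$. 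On the other side, $D(U,\tilde U)^2=1-|\bar\lambda|^2$ where $\bar\lambda=\frac1N\sum_j e^{i\theta_j}$ is the centroid of the eigenvalues; note $\bar\lambda\in C$ so $r\le|\bar\lambda|$, which reproves the easy inequality $D\le D^{max}$ but is the wrong direction for us. Finally, both $D^{max}$ and $D$ are invariant under $M\mapsto e^{i\alpha}M$, so I would rotate so that $\bar\lambda=m$ is real and nonnegative; then $\frac1N\sum_j\cos\theta_j=m$.

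Next comes a case split. If $D(U,\tilde U)^2\ge \frac1{2N}$ the claimed inequality is immediate, since $D^{max}(U,\tilde U)^2=1-r^2\le 1\le 2N\cdot D(U,\tilde U)^2$. So I would assume $D(U,\tilde U)^2<\frac1{2N}$, i.e.\ $m^2>1-\frac1{2N}$; then $m\ge 1-\frac1{2N}$, hence $N(1-m)<\frac12$. The key observation is that $\sum_j(1-\cos\theta_j)=N(1-m)$ is a sum of nonnegative terms, so every individual term is at most the total: $1-\cos\theta_j\le N(1-m)$, i.e.\ $\mathrm{Re}(e^{i\theta_j})=\cos\theta_j\ge 1-N(1-m)=:c>\frac12$ for all $j$. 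Therefore every eigenvalue, and hence all of $C$, lies in the half-plane $\{z:\mathrm{Re}(z)\ge c\}$, which forces $r=\min_{z\in C}|z|\ge c$.

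The proof then finishes by chaining elementary inequalities: $D^{max}(U,\tilde U)^2=1-r^2\le 1-c^2=2N(1-m)-N^2(1-m)^2\le 2N(1-m)\le 2N(1-m)(1+m)=2N(1-m^2)=2N\cdot D(U,\tilde U)^2$, and taking square roots with $N=2^n$ gives $D^{max}(U,\tilde U)\le 2^{(n+1)/2}D(U,\tilde U)$.

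I don't expect a genuine obstacle. The step needing the most care is the identification $D^{max}(U,\tilde U)^2=1-\mathrm{dist}(0,C)^2$, i.e.\ the fact that the numerical range of $M$ is the convex hull of its spectrum, together with applying the global-phase normalisation consistently to \emph{both} quantities before extracting the pointwise bound $\cos\theta_j\ge c$; once the eigenvalues are pinned into a small arc around $1$, everything reduces to one-line algebra.
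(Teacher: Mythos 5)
Your proof is correct, but it takes a genuinely different route from the paper's. You diagonalise $M=U^\dagger\tilde U$ and argue about the geometry of its eigenvalues: after normalising the global phase, the smallness of $D$ pins every eigenvalue phase into a small arc around $1$ (via the pointwise bound $1-\cos\theta_j\le N(1-m)$), which bounds the distance from the origin to the numerical range and hence $D^{max}$. The paper instead never diagonalises: it takes the state $\ket\phi$ achieving $\mu=\min_{\ket\phi}|\bra\phi M\ket\phi|$, extends it to an orthonormal basis, and bounds $|\Tr(M)|\le 2^n-1+\mu$ using only that every diagonal entry of a unitary has modulus at most $1$; a few lines of algebra then give $D^2\ge\frac{1}{2^{n+1}}(1-\mu^2)=\frac{1}{2^{n+1}}(D^{max})^2$ with no case split and no phase normalisation. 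The paper's argument is shorter and more elementary (it needs neither the spectral theorem nor the fact that the numerical range of a normal matrix sits inside the convex hull of its spectrum, though for your argument only the easy containment direction of that fact is actually used). Your argument buys some extra geometric insight --- when $D$ is small, \emph{all} eigenvalues of $U^\dagger\tilde U$ are close to a common phase, which is a slightly stronger conclusion than the inequality itself --- at the cost of the case split and the normalisation bookkeeping you already flagged as the delicate points. Both yield exactly the constant $2^{(n+1)/2}$.
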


\begin{proof}
Let $\mu=\min_{\ket{\phi}}|\bra{\phi}U^\dagger\tilde{U}\ket{\phi}|$, and $\ket{\phi}$ be a minimizing state.
Let $\cal B$ be an orthonormal basis that contains $\ket{\phi}$ as one of its $2^n$ states.
We have 
\[
\left|\Tr(U^\dagger\tilde{U})\right|  =\left|\sum_{b \in {\cal B}} \bra{b}U^\dagger \tilde{U}\ket{b}\right|
 \leq \sum_{b \in{\cal B}} \left| \bra{b}U^\dagger\tilde{U}\ket{b}\right|
\leq 2^n-1+|\bra{\phi}U^\dagger\tilde{U}\ket{\phi}| 
   = 2^n - 1 + \mu.
\]
We now bound
\begin{align*}
D(U,\tilde{U})^2 &=   1 -  \left|\frac{1}{2^n}\Tr(U^\dagger\tilde{U})\right|^2 \geq 1- (1-(1-\mu)/2^n)^2= \frac{1-\mu}{2^n}\left(2-\frac{1}{2^n}+\frac{\mu}{2^n}\right)\\
& \geq \frac{1-\mu}{2^n}\cdot 1 \geq \frac{1-\mu}{2^n}\cdot\frac{1+\mu}{2} = \frac{1}{2^{n+1}}(1-\mu^2) = \frac{1}{2^{n+1}}D^{max}(U,\tilde{U})^2,
\end{align*}
which implies the inequality of the theorem.
\end{proof}

The above theorem is a strengthening (for the special case of unitaries) of a more general but quadratically weaker bound relating these two distances due to Magesan, Gambetta, and Emerson~\cite{MGE:randbench}, and used for instance in~\cite[Eq.~(3)]{wallman:errorrates} and \cite[p.~2]{KLDF:comparing}.

Now we make the simple but powerful observation that if $U$ and $\tilde{U}$ differ only in one $k$-qubit gate ($G$ vs $\tilde{G}$), then the two norms are within a factor of roughly~$2^{k/2}$ of one another. Specifically, let $U=U_1(G\otimes I_{2^{n-k}})U_2$ and $\tilde{U}=U_1(\tilde{G}\otimes I_{2^{n-2}})U_2$, where $U_1$ and $U_2$ are arbitrary unitaries, and $G$ and $\tilde{G}$ are $k$-qubit gates. For notational simplicity we wrote $G$ and $\tilde{G}$ as acting on the first $k$ qubits of the state, but in fact they may act on any subset of $k$ of the $n$ qubits, not necessarily contiguous. 
We have
\[
\frac{1}{2^n}\Tr(U^\dagger \tilde{U})
=\frac{1}{2^n}\Tr(U_2^\dagger(G^\dagger\otimes I_{2^{n-k}})U_1^\dagger\cdot U_1(\tilde{G}\otimes I_{2^{n-k}})U_2)
=\frac{1}{2^n}\Tr(G^\dagger\tilde{G}\otimes I_{2^{n-k}})=\frac{1}{2^k}\Tr(G^\dagger\tilde{G})
\]
and hence $D(U,\tilde{U})=D(G,\tilde{G})$.
We also have 
\[
\min_{\ket{\phi}}\left|\bra{\phi}U^\dagger\tilde{U}\ket{\phi}\right|
=\min_{\ket{\phi}}\left|\bra{\phi}(G^\dagger\otimes I_{2^{n-k}})(\tilde{G}\otimes I_{2^{n-k}})\ket{\phi}\right|
=\min_{\ket{\psi}}\left|\bra{\psi}G^\dagger\tilde{G}\ket{\psi}\right|
\]
and hence $D^{max}(U,\tilde{U})=D^{max}(G,\tilde{G})$. 

Therefore, using Theorem~\ref{th:DvsDmax}, the probability of detecting a difference between $U$ and $\tilde{U}$ is
\[
\Prob=\frac{1}{2}D(U,\tilde{U})^2=\frac{1}{2}D(G,\tilde{G})^2\geq
\frac{1}{2^{k+2}}D^{max}(G,\tilde{G})^2
=\frac{1}{2^{k+2}}D^{max}(U,\tilde{U})^2.
\]
In particular, if the worst-case distance is $D^{max}(U,\tilde{U})\geq\eps$ and $k=O(1)$ (say, $U$ and $\tilde{U}$ differ only in one $k$-qubit gate, or in one block of errors that affects only $k$ qubits, not necessarily contiguous), then our detection probability is $\Prob=\Omega(\eps^2)$.
We can efficiently increase this detection probability to close to~1: if we run $O(\log(1/\delta)/\eps^2)$  tests, then if $U$ and $\tilde{U}$ are equal then all tests will output~0, while if $D^{max}(U,\tilde{U})\geq\eps$ then with probability $\geq 1-\delta$ at least one of the tests will output~1.

The $1/\eps^2$-factor in the number of tests could be improved to $1/\eps$ using amplitude amplification~\cite{bhmt:countingj}, but that would be a much less lightweight procedure: it also requires the ability to apply controlled versions of $U$ and $\tilde{U}$ as well as of their inverses, which may be technologically rather demanding. In any case, if we can apply inverses then there is an easier test that only uses $n$ EPR-pairs instead of $2n$: apply $\tilde{U}$ and $U^{-1}$ to the first half of a $2n$-qubit maximally entangled state, reverse the $H$s and CNOTs that prepared the entangled state, and check (by a measurement in the computational basis) whether you get back $\ket{0^{2n}}$, as this provides an estimate of $D(U,\tilde{U})$.

\subsection{If we can apply the circuits conditionally}

In the case where we cannot apply conditional versions of $U$ and $\tilde{U}$, like in the previous section, differences in their global phases are physically meaningless and we cannot detect them.
Now suppose we have slightly more power: we \emph{can} apply $U$ and $\tilde{U}$ in a conditional manner, but not their inverses. This allows for a slightly more efficient test that uses $2n+1$ qubits instead of $4n$:
\begin{enumerate}
\item Prepare $H\ket{0}$ tensored with a $2n$-qubit maximally entangled state ($2n+1$ qubits in total).
\item Conditioned on the first qubit being $\ket{0}$, apply $U$ to the first $n$-qubit block;\\
conditioned on the first qubit being $\ket{1}$, apply $\tilde{U}$ to the first $n$-qubit block.
\item Apply $H$ to the first qubit and measure it.
\end{enumerate}
The probability that the above algorithm outputs~1 is
\[
\Prob=\frac{1}{2}-\frac{1}{2}\mathbb{R}(\braket{\psi_U}{\psi_{\tilde{U}}})=\frac{1}{2}-\frac{1}{2\cdot 2^n}\mathbb{R}(\Tr(U^\dagger \tilde{U})).
\]
Note that $\Tr(U^\dagger \tilde{U})$ is not squared here, in contrast to the expression for the probability in the previous section. 
Hence this test is sensitive to the relative phase between $U$ and $\tilde{U}$.
In particular, if $\tilde{U}=U$ then $p=0$, while if $\tilde{U}=-U$ then $\Prob=1$.

By similar calculations as before, if the  only difference between $U$ and $\tilde{U}$ is in one $k$-qubit gate ($G$ vs $\tilde{G}$), then we have
\[
\frac{1}{2^n}\Tr(U^\dagger\tilde{U})=\frac{1}{2^k}\Tr(G^\dagger\tilde{G})
\]
and
\[
\Prob=\frac{1}{2}-\frac{1}{2\cdot 2^k}\mathbb{R}(\Tr(G^\dagger \tilde{G})).
\]

\subsection{Reducing the fault-rate in a production line of circuits}\label{ssec:reducingfaultrate}

Suppose we have a production line that is intended to produce identical circuits that implement a particular unitary $U$. Like everything else in life, the production line is not perfect. Assume that each circuit is perfect (i.e., equal to $U$) with probability $1-f$ and faulty with probability $f$, meaning its $D^{max}$-distance from the ideal $U$ is at least $\eps$; for example because $U$ and $\tilde{U}$ differ in exactly one gate like before.\footnote{For simplicity we will ignore the case of positive but smaller error $<\eps$.} If we don't do anything, we expect a fraction of roughly $f$ of the circuits to be faulty. We would like to reduce this fraction by efficiently identifying the faulty circuits. We can achieve this by comparing the circuits against each other, using the fact that most are probably correct. Note that we are not assuming here that we can run the ideal $U$ as a black-box.

Assume we have a test that, given two circuits $U_1$ and $U_2$, can distinguish between the cases $U_1=U_2$ (up to global phase) and $D^{max}(U_1,U_2)\geq\eps$ with success probability $\geq 2/3$
(for example, our test from Section~\ref{ssecusingseparately} will do that if the distance is due to one faulty gate).
Note that we can reduce the error probability of this test from $1/3$ to small $\delta$ by running it $O(\log(1/\delta))$ times and taking the majority outcome among those runs.

Let us take a batch of $n$ circuits coming off the production line, with $n$ odd. By a Chernoff bound, the probability that more than half of them are faulty is at most $e^{-D(1/2||f)n}$, where $D(p||q)=p\ln(p/q)+(1-p)\ln((1-p)/(1-q))$ is the Kullback-Leibler divergence (a.k.a.\ relative entropy) between binary distributions with probabilities $p$ and $q$ respectively, measured in nats rather than bits. If $f$ is bounded away from $1/2$, then $D(1/2||f)=\Omega(1)$ and $e^{-D(1/2||f)n}$ is exponentially small in $n$.
Now suppose we run our test on each of the $\binom{n}{2}$ pairs in the batch, with error probability reduced to $\delta\ll 1/n^2$. Then, except with probability $p_E\leq\binom{n}{2}\delta+e^{-D(1/2||f)n}\ll 1$, all tests succeed \emph{and} more than half of the circuits in the batch are correct.
Condition on this event below.

Each circuit in the batch will be involved in $n-1$ tests. For every good circuit, at least half of the tests it is involved in will be with other good circuits and hence will say ``equal''. For faulty circuits, more than half of the tests it is involved in will be with good circuits and hence these will say ``not equal''. Accordingly, if we throw away the circuits where more than half of the tests say ``not equal'', then we will exactly eliminate the faulty circuits from this batch. 

With probability $p_E$, the event we conditioned on did not happen, but the worst that can occur in that case is that we err on all $n$ circuits in that batch, in the sense of throwing away all good circuits from the batch and keeping all faulty ones. Since $p_E$ is exponentially small in~$n$, this bad event only negligibly affects the expected fraction of circuits we mishandled.

By choosing the batch-size $n$ large enough, we can thus reduce the expected fault rate from $f$ to anything we want. The number of black-box runs used for analyzing each batch of $n$ circuits, is $O(\binom{n}{2}\log(1/\delta))=O(n^2\log(n))$.

\subsection{Detecting \emph{two} faults is hard for our test in the worst case}\label{ssec:twofaults}

The test of Section~\ref{ssecusingseparately} works to detect a one-gate error, because if only one gate is affected then there is a fairly tight relation between average-case distance $D(U,\tilde{U})$ that we can test for, and the worst-case distance $D^{max}(U,\tilde{U})$ that we would like to test for.  What if there are \emph{two} faulty gates in $\tilde{U}$?  One might expect that detecting two errors should be easier than detecting one, but unfortunately this turns out to be false (at least in the worse case) because the two faults can conspire to destroy the close relation between the worst-case and average-case distance measures.

Here's a simple example. Let $V$ be the $n$-qubit C$^{n-1}$NOT gate, which applies an $X$-gate to the last qubit conditioned on the first $n-1$ qubits being in basis state $\ket{1^{n-1}}$. Suppose $U=(I\otimes H)V(I\otimes H)$ and $\tilde{U}=V$. In other words, the intended $H$-gates on the last qubit at the start and the end of the circuit are replaced by identities, so only two of the gates of $U$ are faulty. Because $HXH=Z$, we have
\[
U=\left(
\begin{array}{cccc}
1 & & & \\
& \ddots & &\\
& & 1 & \\
& & & Z
\end{array}
\right)
\mbox{ and }
\tilde{U}=\left(
\begin{array}{cccc}
1 & & & \\
& \ddots & &\\
& & 1 & \\
& & & X
\end{array}
\right).
\]
The matrix $U^\dagger \tilde{U}$ has $ZX=iY$ in its lower-right corner. Hence $\min_{\ket{\phi}}|\bra{\phi}U^\dagger\tilde{U}\ket{\phi}|=0$, as witnessed for instance by taking $\ket{\phi}=\ket{1^n}$.
This implies $D^{max}(U,\tilde{U})=1$. 
On the other hand, $\Tr(U^\dagger \tilde{U})=2^n-2$, hence $D(U,\tilde{U})^2=1-(1-2/2^n)^2\approx 4/2^n$.
The latter implies that one run of our test only has exponentially small probability of detecting the large $D^{max}$-distance between $U$ and $\tilde{U}$. In other words, our test fails miserably to detect two or more adversarially placed faulty gates.

\section{Testing Clifford circuits}\label{sec:clifford}

Let ${\cal P}=\{I,X,Y,Z\}$ be the set of 1-qubit Paulis.
Note that non-identity Paulis anti-commute ($XZ=-ZX$ etc.)
 and that $Y=iXZ$.
Let ${\cal P}^n=\{I,X,Y,Z\}^{\otimes n}$ be the set of $4^n$ $n$-qubit Paulis.
These matrices are unitary and Hermitian, and hence self-inverse.

An $n$-qubit \emph{Clifford circuit} $U$ consists of Pauli gates, Hadamard gates ($H$), phase gates ($S$),
and CNOT gates.
These are exactly the unitaries that map (by conjugation) all elements of ${\cal P}^n$ to elements of ${\cal P}^n$, possibly with an overall phase of $\pm 1$.
We assume there are no intermediate measurements of qubits in the middle of the circuit; these may all be pushed to the end using some auxiliary qubits and CNOTs. 

In this section we will deal with the situation where we want to implement an $n$-qubit Clifford circuit $U$, which we know fully (i.e., we have a classical description of it). Instead we have a Clifford circuit $\tilde{U}$ that we can apply as a black-box. Our goal here is to test whether $U=\tilde{U}$ and, if not, to figure out how they differ so we can correct the errors.

\subsection{What it means for two Clifford circuits to be different}

As mentioned, conjugation by a Clifford circuit $U$ maps elements of ${\cal P}^n$ to elements of ${\cal P}^n$, up to an overall phase $\pm 1$, and it is well known that this map (ignoring the $\pm 1$s) corresponds to a linear map $\mathbb{F}_2^{2n}\to\mathbb{F}_2^{2n}$, where $\mathbb{F}_2 $ is the field of two elements.
Here we represent $I$ by $00\in\mathbb{F}_2^2$, $X$ by 10, $Z$ by 01, and $Y$ by 11, so we may identify an $n$-qubit Pauli with an element of $\mathbb{F}_2^{2n}$. For example, we can identify $P=X\otimes Z$ with the 4-bit vector $(1,0,0,1)^T$.
The correspondence between a Clifford and its associated linear map with signs seems to be folklore. It can be derived from the connection with the symplectic group, see for instance~\cite[Section~I.A]{koenig&smolin:Clifford} (see also~\cite[Section~II.B]{gross:hudson}, though that applies to qudits of odd dimension).
We give a simple proof below for completeness.

\begin{theorem}[folklore]\label{th:cliffordlinear}
Let $U$ be an $n$-qubit Clifford circuit, and define the associated map $U:{\cal P}^n\to\pm{\cal P}^n$ by $U(P)=U P U^\dagger$. There exists an invertible matrix $M_U\in\mathbb{F}_2^{2n\times 2n}$ such that $U(P)\in\{M_UP,-M_UP\}$ (where with slight abuse of notation we view $P$ both as an $n$-qubit Pauli and as an element of $\mathbb{F}_2^{2n}$).
\end{theorem}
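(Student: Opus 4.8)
The plan is to verify the statement on a generating set of the Clifford group and then show that the desired structure is preserved under composition. First I would observe that since any $n$-qubit Clifford circuit is a product of the elementary gates $H$, $S$, CNOT (acting on various qubits), and Pauli gates, it suffices to (a) check the claim for each single elementary gate acting on one or two qubits (tensored with identity elsewhere), and (b) check that if $U$ and $V$ both have the claimed form, then so does their product $UV$.

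For step (a), I would compute the conjugation action of each elementary gate on the one- or two-qubit Paulis by hand: $H$ swaps $X\leftrightarrow Z$ and sends $Y\mapsto -Y$; $S$ sends $X\mapsto Y$, $Y\mapsto -X$, $Z\mapsto Z$; CNOT sends $X\otimes I\mapsto X\otimes X$, $I\otimes X\mapsto I\otimes X$, $Z\otimes I\mapsto Z\otimes I$, $I\otimes Z\mapsto Z\otimes Z$ (and the rest follow by multiplicativity, using $Y=iXZ$); and a Pauli gate $P$ conjugates $Q$ to $\pm Q$ depending on whether $P$ and $Q$ commute, which is $Q$ itself in $\mathbb{F}_2^{2n}$. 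In each case the induced map on the $\mathbb{F}_2$-coordinates is visibly linear and invertible, so the corresponding $2\times 2$ or $4\times 4$ matrix over $\mathbb{F}_2$ exists; padding with an identity block on the untouched qubits keeps it invertible. The key structural fact making this work is that conjugation is multiplicative, $U(PQ)U^\dagger=(UPU^\dagger)(UQU^\dagger)$, so the action on products of Paulis is determined by the action on the single-qubit generators $X_i, Z_i$, and multiplication of Paulis corresponds (again ignoring phases, and bearing in mind $Y=iXZ$) exactly to addition of vectors in $\mathbb{F}_2^{2n}$.

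For step (b), suppose $U(P)=\pm M_U P$ and $V(P)=\pm M_V P$ for all $P$. Then $(UV)(P)=U(V(P)U(\pm M_V P)=\pm M_U M_V P$, so $M_{UV}=M_U M_V$ works and is invertible as a product of invertibles. Combining (a) and (b) by induction on the number of gates in the circuit gives the theorem.

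The main thing to be careful about — rather than a deep obstacle — is the bookkeeping of the $\pm 1$ phases and, relatedly, the factor of $i$ in $Y=iXZ$: one must check that multiplying Paulis and then dropping phases genuinely gives $\mathbb{F}_2^{2n}$-addition, and that the sign ambiguity in $U(P)\in\{M_U P,-M_U P\}$ is consistent under composition (it is, since we only claim membership in $\{\pm M_U P\}$, not a specific sign). I would also note explicitly that invertibility of $M_U$ is automatic once $U$ is a Clifford, since $U^\dagger$ is also a Clifford and induces the inverse map; alternatively, each generator's matrix is invertible and products of invertibles are invertible.
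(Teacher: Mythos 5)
Your proposal is correct and follows essentially the same route as the paper's proof: verify the claim for each generator ($H$, $S$, CNOT, Pauli gates) by direct computation of the conjugation action, and obtain the general case by composing the resulting linear maps. Your additional remarks on why the action on $X_i,Z_i$ determines everything, and on invertibility via $U^\dagger$, are correct elaborations of steps the paper leaves implicit.
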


\begin{proof}
The circuit $U$ is just a composition of Pauli gates, $H$, $S$, and CNOT gates. Hence it suffices to prove the theorem for each of these gates and then compose the linear maps.

First, when conjugating a 1-qubit Pauli $P$  with a 1-qubit Pauli gate $U$, we just get $P$ back, with a minus sign if $P$ and $U$ anti-commute; we ignore the sign for the purposes of this theorem. The corresponding matrix $M_U$ is just the identity.

Second, conjugation by $H$ maps 1-qubit Paulis to 1-qubit Paulis as:
\begin{quote}
$I\to I$, $X\to Z$, $Z\to X$, $Y\to -Y$
\end{quote}
In the 2-bit representation (ignoring the $\pm 1$) this corresponds to $2\times 2$ matrix
$
\left(\begin{array}{cc}
0 & 1\\
1 & 0\\
\end{array}\right)$  over $\mathbb{F}_2$.
 
Third, conjugation by $S$ maps
\begin{quote}
$I\to I$, $X\to Y$, $Z\to Z$, $Y\to -X$
\end{quote}
In the 2-bit representation (ignoring the $\pm 1$) this corresponds to the $2\times 2$ matrix
$
\left(\begin{array}{cc}
1 & 0\\
1 & 1\\
\end{array}\right)$.

Fourth, conjugation by CNOT maps 2-qubit Paulis to 2-qubit Paulis as given for instance in Figure~3 of~\cite{kruw:noisethresholdj}. It may be verified that in the 4-bit representation this map corresponds to the following $4\times 4$ matrix:
\[
\left(\begin{array}{cccc}
1 & 0 & 0 & 0\\
0 & 1 & 0 & 1\\
1 & 0 & 1 & 0\\
0 & 0 & 0 & 1\\
\end{array}\right).
\]

\vspace*{-2em}
\end{proof}

Clearly, if $M_U$ and $M_{\tilde{U}}$ are different matrices, then $U$ and $\tilde{U}$ must be different Clifford circuits.
However, different Clifford circuits can induce the same matrix $M_U$.
A simple example is a circuit~$U$ that only consists of Pauli gates: if we ignore the $\pm 1$, then conjugation by $U$ is simply the identity map on ${\cal P}^n$, so all Pauli circuits induce the same $M_U=I$.
We now show that this basically describes the only case where different Cliffords induce the same $M_U$:

\begin{corollary} 
Suppose $n$-qubit Clifford circuits $U$ and $\tilde{U}$ have the same induced matrix $M_U$ in Theorem~\ref{th:cliffordlinear}. Then there exists an $R\in{\cal P}^n$ such that conjugation by $U$ and conjugation by $R\tilde{U}$ are the same map on the set of all density matrices.
\end{corollary}

\begin{proof}
First, by right-multiplying $U$ and $\tilde{U}$ with $\tilde{U}^\dagger$, we may assume without loss of generality that $\tilde{U}=I$ and hence $M_{U}=M_{\tilde{U}}=I$. We now want to show that $U$ corresponds to some $R\in{\cal P}$.

Since $M_U=I$, conjugation by $U$ maps each $P\in{\cal P}^n$ to itself, times a sign $s_P$. Since every density matrix $\rho$ is a linear combination of $P\in{\cal P}$, these signs fully determine the action of $U$ on all density matrices: if $\rho=\sum_P a_PP$, then $U\rho U^\dagger=\sum_P a_P UPU^\dagger=\sum_P a_P s_P P$. 

Let us first consider the $n$ signs $s_{X_j}$ induced by the action of $U$ on  $X_j=I^{\otimes j-1}\otimes X\otimes I^{\otimes n-j}$ (for $j=1,\ldots,n$),
and the $n$ signs $s_{Z_j}$ corresponding to $Z_j=I^{\otimes j-1}\otimes Z\otimes I^{\otimes n-j}$.
We now show that we can choose a (unique) $R\in{\cal P}^n$ consistent with all the signs $s_{X_j}$ and $s_{Z_j}$. Consider $j=1$.
\begin{center}
\begin{tabular}{l}
If $s_{X_1}s_{Z_1}=++$, then we choose $R_1=I$ (because $IXI=+X$ and $IZI=+Z$).\\ 
If $s_{X_1}s_{Z_1}=+-$, then we choose $R_1=X$ (because $XXX=+X$ and $XZX=-Z$).\\ 
If $s_{X_1}s_{Z_1}=-+$, then we choose $R_1=Z$  (because $ZXZ=-X$ and $ZZZ=Z$).\\ 
If $s_{X_1}s_{Z_1}=--$, then we choose $R_1=Y$ (because $YXY=-X$ and $YZY=-Z$). 
\end{tabular}
\end{center}
Similarly choose $R_2,\ldots,R_n\in\{I,X,Y,Z\}$ consistent with the signs $s_{X_2}s_{Z_2},\ldots,s_{X_n}s_{Z_n}$.

We now claim that this choice of $R$ (which has $M_R=I$, like all Pauli circuits) not only has the same signs $s_P$ as $U$ for all $P\in\{X_1,\ldots,X_n,Z_1,\ldots,Z_n\}$,
but in fact  has the same signs $s_P$ for \emph{all} $4^n$ $P\in{\cal P}^n$.
To that end, fix an arbitrary $P$, and write it as 
$$
P=cX_1^{a_1}Z_1^{b_1}\cdots X_n^{a_n}Z_n^{b_n},
$$ 
for some $a_1,\ldots,a_n,b_1,\ldots,b_n\in\01$, and some overall phase $c\in\{1,-1,i,-i\}$ which comes from the fact that $Y=iXZ$. Inserting $I=U^\dagger U$ in many places, we can write
\begin{align*}
s_P P & = UPU^\dagger\\
& =U(cX_1^{a_1}Z_1^{b_1}\cdots X_n^{a_n}Z_n^{b_n})U^\dagger\\
& = cUX_1^{a_1}U^\dagger U Z_1^{b_1}U^\dagger U\cdots U^\dagger U X_n^{a_n}U^\dagger U Z_n^{b_n}U^\dagger\\
& = c(UX_1^{a_1}U^\dagger)(U Z_1^{b_1}U^\dagger)\cdots (U X_n^{a_n}U^\dagger)(U Z_n^{b_n}U^\dagger)\\
& =
c(s^{a_1}_{X_1}X_1^{a_1})(s^{b_1}_{Z_1}Z_1^{b_1})\cdots (s^{a_n}_{X_n}X_n^{a_n})(s^{b_n}_{Z_n}Z_n^{b_n})\\
& = (\prod_{j=1}^n s^{a_j}_{X_j}s^{b_j}_{Z_j})(cX_1^{a_1}Z_1^{b_1}\cdots X_n^{a_n}Z_n^{b_n})\\
& = \prod_{j=1}^n s^{a_j}_{X_j}s^{b_j}_{Z_j}\, P.
\end{align*}
This shows that $s_P=\prod_{j=1}^n s^{a_j}_{X_j}s^{b_j}_{Z_j}$, so all $4^n$ signs $s_P$
are fully determined by the $2n$ signs $s_{X_1},s_{Z_1},\ldots,s_{X_n},s_{Z_n}$.
But by the same calculation, $R$ induces exactly the same signs for all $P\in{\cal P}^n$.
Hence conjugation by $U$ and $R$ are the same map on ${\cal P}^n$ (and by linearity  are the same map on all $n$-qubit density matrices).
\end{proof}

\subsection{Our test for detecting a difference between two Clifford circuits}\label{ssec:cliffordtest}

The previous theorems can be used to design an efficient test to detect whether two Clifford circuits (one given classically, the other as a quantum black-box) are equal or not. The test is based on the observation (used for instance in Freivalds's well-known randomized algorithm for verifying matrix multiplication~\cite{freivalds:matrixmult}) that one can detect whether two matrices are equal by comparing their images on a random vector: if the matrices are equal then these images will be the same, but if the two matrices are different then these images will be different with high probability.
In our scenario, if two Clifford circuits $U$ and $\tilde{U}$  are different by more than an $n$-qubit Pauli, then the associated maps $U:{\cal P}^n\to\pm{\cal P}^n$ and $\tilde{U}:{\cal P}^n\to\pm{\cal P}^n$ will give different $n$-qubit Paulis (even when ignoring their signs) on at least half of all $4^n$ Paulis:

\begin{theorem}\label{th:halfthe paulis}
Let $U$ and $\tilde{U}$ be $n$-qubit Clifford circuits that have distinct associated matrices $M_U$ and $M_{\tilde{U}}$ (equivalently, conjugation by $U$ and $R\tilde{U}$ are distinct maps for all $R\in{\cal P}^n$).
Then for at least $\frac{1}{2}4^n$ of the $P\in{\cal P}^n$, $M_U P\neq M_{\tilde{U}}P$ .
\end{theorem}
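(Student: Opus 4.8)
The plan is to reduce the claim to a statement about the linear map $N := M_U^{-1}M_{\tilde U} - I$ over $\mathbb{F}_2^{2n}$, and to count the vectors $P \in \mathbb{F}_2^{2n}$ on which it vanishes. First I would observe that $M_U P \neq M_{\tilde U}P$ is equivalent, after left-multiplying by the invertible matrix $M_U^{-1}$, to $M_U^{-1}M_{\tilde U}P \neq P$, i.e.\ to $NP \neq 0$. So the set of ``bad'' Paulis (those on which $U$ and $\tilde U$ agree) is exactly $\ker N$, a linear subspace of $\mathbb{F}_2^{2n}$. If $U$ and $\tilde U$ have $M_U \neq M_{\tilde U}$, then $M_U^{-1}M_{\tilde U} \neq I$, so $N \neq 0$, hence $\ker N$ is a proper subspace of $\mathbb{F}_2^{2n}$. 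A proper subspace of $\mathbb{F}_2^{2n}$ has dimension at most $2n-1$, hence size at most $2^{2n-1} = \frac{1}{2}4^n$. Therefore the set of $P$ with $M_U P \neq M_{\tilde U}P$, which is the complement of $\ker N$, has size at least $4^n - \frac{1}{2}4^n = \frac{1}{2}4^n$, which is exactly what we want.

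The only subtlety I would want to address carefully is the identification of ``$U$ and $\tilde U$ agree on the Pauli $P$ as unsigned Paulis'' with ``$M_U P = M_{\tilde U}P$ in $\mathbb{F}_2^{2n}$.'' By Theorem~\ref{th:cliffordlinear} the action of $U$ on $P$ is $\pm M_U P$ and that of $\tilde U$ is $\pm M_{\tilde U}P$, so when we ignore signs the images coincide precisely when $M_U P = M_{\tilde U}P$ as elements of $\mathbb{F}_2^{2n}$; this is the translation already set up in the theorem statement, so nothing new is needed. I should also note the bracketed equivalence in the hypothesis — that $M_U \neq M_{\tilde U}$ is the same as ``conjugation by $U$ and $R\tilde U$ are distinct for all $R \in {\cal P}^n$'' — is exactly the content of Theorem~\ref{th:equalMUs} and can simply be cited.

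There is essentially no hard part here: the whole statement is the elementary linear-algebra fact that a proper $\mathbb{F}_2$-subspace of a $2n$-dimensional space contains at most half the vectors, dressed up in the Pauli/symplectic language. If I wanted the result in a more self-contained form I could prove the subspace bound directly: pick any $v \notin \ker N$, then $P \mapsto P + v$ is an injection from $\ker N$ into its complement, so $|\ker N| \le |(\ker N)^c|$ and hence $|\ker N| \le \frac{1}{2}4^n$. Either way the argument is a couple of lines; the conceptual work was all done in establishing the $\mathbb{F}_2$-linear representation in Theorems~\ref{th:cliffordlinear} and~\ref{th:equalMUs}.
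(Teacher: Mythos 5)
Your proposal is correct and is essentially the paper's own argument: the paper works directly with the nonzero matrix $M_U - M_{\tilde{U}}$ over $\mathbb{F}_2$ and bounds its kernel by $2^{2n-1}$, whereas you pre-multiply by $M_U^{-1}$ and consider $M_U^{-1}M_{\tilde{U}} - I$, but the kernel (and hence the counting) is identical. No gap.
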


\begin{proof}
Consider the matrix $M_U-M_{\tilde{U}}\in\mathbb{F}_2^{2n\times 2n}$. This is a nonzero matrix, hence its kernel has dimension at most $2n-1$, which means that $(M_U-M_{\tilde{U}})P=0$ for at most $2^{2n-1}$ different $P$s. Therefore $M_U P\neq M_{\tilde{U}}P$ for at least $2^{2n}-2^{2n-1}=\frac{1}{2}4^n$ of the $P\in{\cal P}^n$.
\end{proof}

Of course, it is possible that $U$ and $\tilde{U}$ only differ by an $n$-qubit Pauli, and we have to consider that case separately.

Now suppose we have a Clifford circuit $\tilde{U}$ that is intended to implement a known Clifford circuit~$U$. We can run $\tilde{U}$ but not its inverse, and want to test whether it  indeed equals the intended~$U$.
Our test starts by choosing a uniformly random $P\in{\cal P}^n$. We compute $U^\dagger(P)=U^\dagger P U$,\footnote{A classical computer can do this in time linear in the number of gates of~$U$: use the $2n$-bit representation and update this gate-by-gate according to the action of the Clifford gates as described in the proof of Theorem~\ref{th:cliffordlinear}; also keep track of the overall phase $\pm 1$. Note that we want to do this for $U^\dagger$ so we have to reverse the order of gates given by~$U$, and invert the gates (which only affects the $S$-gate, since the other Clifford gates are self-inverse).}
which is a signed $n$-qubit Pauli $Q=sQ_1\otimes\cdots\otimes Q_n\in\pm{\cal P}^n$. 
Note that if we start with an eigenstate of~$Q$ and apply $U$ to it, then we obtain an eigenstate of~$P$ itself, with the same eigenvalue.
Our test prepares a tensor-product eigenstate $\ket{\psi_{in}}$ of~$Q$ as follows: 
\begin{center}
\begin{tabular}{l}
for $j=1,\ldots,n$:\\
\hspace*{1em} if $Q_j\in\{X,Y,Z\}$, then set the $j$th qubit of $\ket{\psi_{in}}$ to either the $+1$-eigenstate or the\\ \hspace*{2em}$-1$-eigenstate of $Q_j$, each with probability 1/2;\\ 
\hspace*{1em} if $Q_j=I$, then set the $j$th qubit of $\ket{\psi_{in}}$ to the $+1$-eigenstates $\ket{0}$ or $\ket{1}$, each\\ 
\hspace*{2em}with probability $1/2$ (equivalently, we can think of this as the maximally mixed\\
\hspace*{2em}state, $\frac{1}{2}\ketbra{0}{0}+\frac{1}{2}\ketbra{1}{1}$). 
\end{tabular}
\end{center}
By construction $\ket{\psi_{in}}$ is an eigenstate of $Q$, with an eigenvalue $\lambda\in\{+1,-1\}$ that we know. 
Now we run $\tilde{U}$ on $\ket{\psi_{in}}$ and measure  the $\pm 1$-valued observable $P$ on state $\tilde{U}\ket{\psi_{in}}$.

If $U=\tilde{U}$, then the measurement gives the known value $\lambda$ as outcome, with probability~1.
However, we claim that if $U$ and $\tilde{U}$ are different Cliffords, then we will see the opposite outcome $-\lambda$ with probability at least~$1/4$.
To prove that claim we make a case-distinction for the two ways in which $U$ and $\tilde{U}$ can differ (our test doesn't need to know which of the two cases applies).

\medskip

\noindent
{\bf Case 1:} The matrices $M_U$ and $M_{\tilde{U}}$ are distinct.\\
Let $\tilde{Q}=\tilde{U}^\dagger(P)=\tilde{s}\tilde{Q}_1\otimes\cdots\otimes\tilde{Q}_n\in\pm{\cal P}^n$. We don't know what $\tilde{Q}$ is since we don't know what $\tilde{U}$ is. However, by Theorem~\ref{th:halfthe paulis} we have $Q_1\otimes\cdots\otimes Q_n\neq \tilde{Q}_1\otimes\cdots\otimes\tilde{Q}_n$ with probability at least 1/2, over our random choice of~$P$. 
In this case, measuring $P$ on $\tilde{U}\ket{\psi_{in}}$ will give a value different from $\lambda$ with probability~$1/2$, which can be seen as follows, by examining the different ways in which $Q$ and $\tilde{Q}$ could differ (ignoring their overall signs, which do not affect the probabilistic argument below):
\begin{enumerate}
\item There is a location $j$ where $Q_j,\tilde{Q}_j\in\{X,Y,Z\}$ but $Q_j\neq \tilde{Q}_j$.
$\ket{\psi_{in}}_j$ is a $\pm 1$-eigenstate of $Q_j$ but not of $\tilde{Q}_j$. It is a property of the eigenstates of the non-identity Paulis that $\bra{\psi_{in}}_j \tilde{Q}_j \ket{\psi_{in}}_j=0$, which means that the $j$th qubit will contribute a uniformly random sign to the measurement outcome.
\item There is a location $j$ where $Q_j=I$ and $\tilde{Q}_j\in\{X,Y,Z\}$. Then the $j$th qubit has been set to the maximally mixed state, which is an equal mixture of the $+1$-eigenstate and the $-1$-eigenstate of $\tilde{Q}_j$.
Again, the $j$th qubit will contribute a uniformly random sign to the measurement outcome.
\item There is a location $j$ where $Q_j\in\{X,Y,Z\}$ and $\tilde{Q}_j=I$.
In this case $\ket{\psi_{in}}_j$ is always a $+1$-eigenvector of $\tilde{Q}_j$, but it is a $+1$-eigenstate or $-1$-eigenstate of $Q_j$ with probability $1/2$ each. Again, the $j$th qubit will contribute a uniformly random sign to the measurement outcome.
\end{enumerate}
There could be multiple $j$ where $Q_j\neq\tilde{Q}_j$;  each will add a random sign, multiplying out to one random sign. The probability that this random sign equals the value $\lambda$ that we expect to obtain as measurement outcome if $U=\tilde{U}$, is $1/2$.
Accordingly, since we have probability~$\geq 1/2$ that $Q$ and $\tilde{Q}$ differ in at least one $j$, our probability to detect a difference between $U$ and $\tilde{U}$ is~$\geq 1/4$.

\medskip

\noindent
{\bf Case 2:} There is an $R\in{\cal P}^n\setminus\{I^{\otimes n}\}$ s.t.\ conjugation by $U$ and $R\tilde{U}$ are the same maps.\\
Since $P$ is uniformly random, in each location~$j$ where $R_j\neq I$, the Paulis $R_j$ and $P_j$ at that location will commute with probability 1/2 (namely if $P_j$ is chosen to be $I$ or $R_j$) and anti-commute with probability 1/2 (namely if $P_j$ is chosen to be one of the other 2 Paulis), independently of what happens in the other locations. In the locations~$j$ where $R_j=I$, this will always commute with $P_j$. Hence $RPR=P$ with probability 1/2 and $RPR=-P$ with probability 1/2.
We know $U\ket{\psi_{in}}$ is a $\lambda$-eigenstate of $P$.
But then it will be a $-\lambda$-eigenstate of $RPR$ with probability 1/2. Hence $\tilde{U}\ket{\psi_{in}}$ will be a $-\lambda$-eigenstate of~$P$ with probability~1/2.

\medskip

In sum, our test will output the known value $\lambda\in\{+1,-1\}$ with probability~1 if $U=\tilde{U}$, but will output $-\lambda$ with probability at least~$1/4$  if $U\neq \tilde{U}$. This allows us to detect that $U$ and $\tilde{U}$ are different Cliffords. 

The cost of this test is essentially as small as could be:
computing $Q=U^\dagger(P)$ has classical cost linear in the size of the known circuit~$U$; then we need to prepare the $n$-qubit tensor-product state $\ket{\psi_{in}}$, run $\tilde{U}$ once on it, and measure $P$ on the resulting state.
This gives us constant probability of detecting a difference between the two Clifford circuits $U$ and $\tilde{U}$ if there is one.
Note that $n$ single-qubit Pauli measurements according to~$P=P_1\otimes\cdots\otimes P_n$ would also suffice: the expectation value of (indeed, the whole distribution of) the product of the $n$ single-qubit measurement outcomes  is the same as that of~$P$.  This might be easier to realize technologically than one overall $\pm 1$-valued $n$-qubit measurement. 

Running our
test $k$ times, with fresh random~$P$ in each run, will detect $U\neq\tilde{U}$ with success probability $\geq 1-(3/4)^k$.
Setting $k=\ceil{\log(1/\delta)/\log(4/3)}$, the detection probability is $\geq 1-\delta$.
If we fix $\delta$ to some small constant, then we need to run our test only a constant number of times in order to achieve such high success probability.
As we noted in the introduction, our test still works to detect whether the implemented Clifford circuit equals $U$ or not, even if the errors are different in each run (i.e., if $\tilde{U}$ is a different erroneous Clifford in different runs).

Our ability to detect a difference from $U$ with probability as close as we wish to~1 using $O(1)$ runs of $\tilde{U}$, also means that a small (but constant) additional error probability due to the unavoidable noise and decoherence in each of these runs still leaves us with high success probability.

\subsection{Finding the error(s)}

The previous section gave a test to see whether $n$-qubit Clifford circuit $U$ (of which we have a classical description) equals another $n$-qubit Clifford circuit $\tilde{U}$ (which we can run as a black-box) or differs from it in some way. If we are in the latter situation, it would be nice if we can efficiently find out where and what the difference was.

Using a number of runs of the above test, we can indeed identify the error, or at least something equivalent to it.
The idea is the following: the known circuit $U$ acts on $n$ qubits and has $s$ gates, so the number of circuits $U'$ that differ from $U$ in one gate (or one Pauli error) is relatively small, only $O(s)$. Accordingly, we can just run the above test for each of those $U'$, testing whether the known Clifford circuit $U'$ equals the circuit $\tilde{U}$ (which we can still run as a black-box). 

Note that the same idea also works if there can be up to $d$ gate-differences  instead of one. However, the number of circuits $U'$ that are within $d$ errors of $U$ is roughly $s^d$, so the number of tests grows quickly (though still polynomially if $d=O(1)$).  Having learnt $\tilde{U}$, we can correct it.

\subsection{Deriving the same Clifford-testing result from \cite{flammia&liu:fidelityestimation} and \cite{SLP:practical}}\label{ssec:flammialiu}

As mentioned in the introduction, after finishing our Clifford test of Section~\ref{ssec:cliffordtest}, we discovered that something very similar can be derived from work of Flammia and Liu~\cite{flammia&liu:fidelityestimation} and da Silva, Landon-Cardinal, and Poulin~\cite{SLP:practical}.
Specifically, Flammia and Liu~\cite{flammia&liu:fidelityestimation} describe a procedure that, given the classical description of a Clifford circuit~$U$ and the ability to run another quantum operation~$\tilde{U}$ as a black-box, estimates (with success probability $\geq 1-\delta$) their entanglement fidelity up to additive error $\leq\eps$ using $O(\frac{1}{\eps^2}\log(1/\delta))$ runs of $\tilde{U}$. Very similar to ours, each run of $\tilde{U}$ in their procedure starts with a product state of eigenstates of a random Pauli, and ends with a Pauli measurement on the final state.
In fact, their procedure even works if $\tilde{U}$ is a general quantum channel (CPTP map) rather than a unitary.

For general unitary circuits, the entanglement fidelity $|\frac{1}{2^n}\Tr(U^\dagger \tilde{U})|^2$ can be arbitrarily close to~1, which means one has to have arbitrarily small $\eps$ to ``see'' the difference between the case $U=\tilde{U}$ and the case where $U$ and $\tilde{U}$ are distinct but have a lot of overlap. However, in the special case where $U$ and $\tilde{U}$ are distinct Clifford circuits, we show below that the entanglement fidelity  is at most $1/2$. Hence running the Flammia-Liu procedure with constant $\eps<1/4$ suffices to detect (with probability $\geq 1-\delta$) any difference between Clifford circuits $U$ and $\tilde{U}$, using just  $O(\log(1/\delta))$ runs of $\tilde{U}$ on product-state inputs and with Pauli measurement at the end, just like our test.

\begin{theorem}
If $U$ and $\tilde{U}$ are distinct $n$-qubit Clifford unitaries,
then $\displaystyle\left|\frac{1}{2^n}\Tr(U^\dagger\tilde{U})\right|^2\leq 1/2$.
\end{theorem}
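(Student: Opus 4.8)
The plan is to set $W=U^\dagger\tilde U$, which is again an $n$-qubit Clifford unitary, and to prove $|\Tr(W)|^2\le\tfrac12\,4^n$; dividing by $4^n$ then gives the claim, since $\big|\frac1{2^n}\Tr(U^\dagger\tilde U)\big|^2=\frac1{4^n}|\Tr(W)|^2$. Note first that this quantity is insensitive to global phases, so ``distinct'' has to be read as ``distinct up to a global phase'' (otherwise the quantity equals $1$); equivalently, $W$ is not of the form $e^{i\theta}I$. Let $M_W\in\mathbb{F}_2^{2n\times2n}$ be the matrix associated to $W$ by Theorem~\ref{th:cliffordlinear}; a short check gives $M_W=M_U^{-1}M_{\tilde U}$, so in particular $M_WP=P$ iff $M_UP=M_{\tilde U}P$. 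I would finish by a two-line case analysis on whether $M_W=I$.

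The one computational ingredient is the Pauli averaging identity $\sum_{P\in{\cal P}^n}PXP^\dagger=2^n\,\Tr(X)\,I$, valid for every $2^n\times2^n$ matrix $X$; it is enough to check it on a single Pauli, where each tensor factor contributes $\sum_{Q\in{\cal P}}QP_jQ^\dagger=2\Tr(P_j)I$. Applying it with $X=W^\dagger$ and pairing against $W$ yields
\[
2^n\,|\Tr(W)|^2=\Tr(W)\,\Tr(W^\dagger)\cdot 2^n=\Tr\!\Big(W\sum_{P\in{\cal P}^n}PW^\dagger P\Big)=\sum_{P\in{\cal P}^n}\Tr(WPW^\dagger P).
\]
Since $W$ is Clifford, $WPW^\dagger$ equals a sign $s_P\in\{\pm1\}$ times the Pauli matrix $M_WP$ (in the $\mathbb{F}_2^{2n}$-representation), so $\Tr(WPW^\dagger P)=s_P\,\Tr\!\big((M_WP)\,P\big)$, which is $s_P\,2^n$ when $M_WP=P$ and $0$ otherwise, because distinct Paulis are trace-orthogonal and a non-identity Pauli is traceless. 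Hence
\[
|\Tr(W)|^2=\sum_{P:\,M_WP=P}s_P\ \le\ \#\{P\in{\cal P}^n:M_WP=P\}=2^{\dim\ker(M_W-I)} .
\]

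\textbf{Case 1: $M_W\ne I$.} Then $M_W-I$ is a nonzero matrix over $\mathbb{F}_2$, so $\dim\ker(M_W-I)\le 2n-1$ and $|\Tr(W)|^2\le 2^{2n-1}=\tfrac12\,4^n$. (Equivalently, this is Theorem~\ref{th:halfthe paulis} applied to $M_U$ and $M_{\tilde U}$.) \textbf{Case 2: $M_W=I$.} By Theorem~\ref{th:equalMUs}, applied with the pair $W,I$ in place of $U,\tilde U$, there is an $R\in{\cal P}^n$ such that conjugation by $W$ and conjugation by $R$ agree; then $R^\dagger W$ commutes with every matrix, so $W=e^{i\theta}R$ for some phase $\theta$. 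Since $W$ is not a global phase, $R\neq I^{\otimes n}$, hence $\Tr(W)=e^{i\theta}\Tr(R)=0$ and $|\Tr(W)|^2=0\le\tfrac12\,4^n$. This completes the argument.

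The only place that genuinely uses the hypothesis ``$U\neq\tilde U$ beyond a global phase'' — and hence the point one must be careful about — is Case~2, where it forces $R\neq I^{\otimes n}$; everything else is routine bookkeeping with the averaging identity and a dimension count. As an aside, there is an even slicker route: $\frac1{2^n}\Tr(U^\dagger\tilde U)=\braket{\psi_U}{\psi_{\tilde U}}$ is the overlap of the Choi states of $U$ and $\tilde U$, which are \emph{stabilizer} states when $U,\tilde U$ are Clifford, and two distinct stabilizer states have squared overlap in $\{0\}\cup\{2^{-k}:k\ge0\}$, in particular at most $1/2$; this gives the theorem in one line, at the cost of invoking the stabilizer-overlap fact rather than staying within the $M_U$-formalism developed above.
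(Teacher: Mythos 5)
Your proof is correct and follows essentially the same route as the paper's: both reduce to bounding $|\Tr(W)|^2$ for the single Clifford $W=U^\dagger\tilde{U}$ via the identity $|\Tr(W)|^2=\frac{1}{2^n}\sum_{P\in{\cal P}^n}\Tr(WPW^\dagger P)$ (which you obtain from the Pauli averaging identity where the paper uses the resolution of the identity by maximally entangled states --- the same computation in different clothing), and then split into the case where $W$ is a Pauli up to phase (trace zero) and the case $M_W\neq I$ (at most half the Paulis are fixed, so the sum is at most $2^{2n-1}$). Your explicit handling of the global-phase caveat --- reading ``distinct'' as ``distinct beyond a global phase,'' without which the statement fails for $\tilde{U}=e^{i\theta}U$ --- is a point the paper glosses over, but it does not change the substance of the argument.
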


\begin{proof}
It suffices to prove that $|\Tr(U)|^2\leq 2^{2n-1}$ for every non-identity Clifford~$U$.
If $U=U_1\otimes\cdots\otimes U_n$ is a product of Paulis then $\Tr(U)=\prod_{j=1}^n\Tr(U_j)=0$, because at least one of the $U_j$'s must be $X$, $Y$ or $Z$, which have trace~0.

If, on the other hand,  $U$ is not a product of Paulis, then by Theorem~\ref{th:halfthe paulis}, conjugation by $U$ maps at least half of all $P\in{\cal P}^n$ to $\pm P'$ for some $P'\neq P$.

Let $\ket{\psi}=\frac{1}{\sqrt{2^n}}\sum_{i\in\01^n}\ket{i}\ket{i}$ be the $2n$-qubit maximally entangled state. 
It is well known (and easy to verify) that for all $2^n$-dimensional matrices $A$ and $B$, we have
\[
\bra{\psi}(A\otimes B)\ket{\psi}=\frac{1}{2^n}\Tr(A^T B).
\]
Note that the $2^{2n}$ states $(I\otimes P)\ket{\psi}$, $P\in{\cal P}^n$, form an orthonormal set,
hence
\[
I_{2^{2n}}=\sum_{P\in{\cal P}^n}(I\otimes P)\ketbra{\psi}{\psi}(I\otimes P).
\]
Let $\bar{U}$ denote the entrywise conjugate of $U$ (without transposition, so $\bar{U}^T=U^\dagger$). Repeatedly using cyclicity of trace, we can now write
\begin{align*}
|\Tr(U)|^2 & = \Tr(\bar{U}\otimes U)\\
& = \Tr\left((\bar{U}\otimes U)\cdot \sum_{P\in{\cal P}^n}(I\otimes P)\ketbra{\psi}{\psi}(I\otimes P)\right)\\
& = \sum_{P\in{\cal P}^n}\Tr\left((\bar{U}\otimes PUP) \ketbra{\psi}{\psi}\right)\\
&=\sum_{P\in{\cal P}^n}\bra{\psi}(\bar{U}\otimes PUP) \ket{\psi}\\
&=\frac{1}{2^n}\sum_{P\in{\cal P}^n}\Tr(U^\dagger PUP)\\
&=\frac{1}{2^n}\sum_{P\in{\cal P}^n}\Tr(P\cdot UPU^\dagger).
\end{align*}
For at least half of all $P\in{\cal P}^n$, $UPU^\dagger$ is $\pm P'$ for some $P'\neq P$, in which case $\Tr(P\cdot UPU^\dagger)=0$. For the other $P\in{\cal P}^n$ (of which there are at most $\frac{1}{2}2^{2n}$), where $UPU^\dagger=\pm P$, we have $\Tr(P\cdot UPU^\dagger)=\pm 2^n$. 
Hence we obtain our desired upper bound:
\[
|\Tr(U)|^2\leq \frac{1}{2^n}\frac{1}{2}2^{2n}2^n=2^{2n-1}.
\]
(Note that this bound is exactly tight for $U=S\otimes I\otimes\cdots \otimes I$, since $|\Tr(S)|^2=2$.)
\end{proof}

\section{Future work}

The goal of lightweight testing and verification of quantum circuits is an important one, especially considering the severe limitations of medium-term quantum computing hardware. In this paper we gave several examples of non-trivial tests one can do to efficiently check whether two circuits are equal or differ in a worst-case distance measure, and in some cases to find the error. Our tests are far from satisfactory, though, and we hope they can be improved in various directions. Below we mention some questions for future work:
\begin{itemize}
\item {\bf Simpler tests.} Can we design better tests that are more lightweight? In particular, the preparation of $2n$ EPR-pairs in Section~\ref{sec:generalgates}, and the preservation of entanglement among those qubits for the duration of the test, is hard to realize in experiments. Can we do something like this with much less entanglement? (see footnote~\ref{jonasfootnote} for one approach)
\item {\bf More general errors.} In Section~\ref{sec:clifford} we handled the situation where our Clifford circuit $U$ is implemented as a circuit $\tilde{U}$ which may be wrong, but is assumed still to be Clifford. However, errors can be of many types. What about testing for a Clifford circuit with one arbitrary \emph{unitary} but possibly non-Clifford error~$V$?
Such a $V$ can be written as a linear combination of the Paulis, so something should be possible along the lines of this paper, but we have not worked this out yet.
Of course, an even more general setting would be arbitrary not-even-unitary errors on some of the qubits, which correspond to arbitrary CPTP maps; in this case we should aim at detecting a large distance in something like the ``diamond norm'' rather than the $D^{max}$-norm.
\item While our Clifford test of Section~\ref{sec:clifford} does not care whether there are one or more faulty gates, the test for general circuits of Section~\ref{sec:generalgates} does. As we showed in Section~\ref{ssec:twofaults}, the close relation between the average-case $D$-distance between two circuits (which is what we can test for) and their worst-case $D^{max}$-distance (which is what we would like to test for) already disappears when we have two faulty gates instead of one.
How can we detect the presence of {\bf multiple faulty gates} in the general, non-Clifford situation?
\item In some cases one can conjugate a possibly-faulty gate with random gates in order to {\bf convert adversarial noise to random noise} (see e.g.\ the work of Wallman and Emerson~\cite{wallman&emerson:noisetailoring}). Can we use that somehow? Such an approach might help bridge the gap between average-case and worst-case distance measures.
\end{itemize}

\subsection*{Acknowledgements.}
We thank Harry Buhrman, Richard Jozsa and Ashley Montanaro for useful interactions in the early stages of this work, in particular Richard for his inspiring note~\cite{jozsa:cliffordnote}. Also many thanks to Ashley, Sergio Boixo, Steve Flammia, Jonas Helsen, Yi-Kai Liu, and the anonymous Quantum referees for very helpful comments and pointers to related work.

NL was partially supported by the UK Engineering and Physical Sciences Research Council through grants EP/R043957/1, EP/S005021/1, EP/T001062/1.
RdW was partially supported by ERC Consolidator Grant 615307-QPROGRESS (which ended February 2019), and by the Dutch Research Council (NWO) through Gravitation-grant Quantum Software Consortium, 024.003.037, and through QuantERA ERA-NET Cofund project QuantAlgo 680-91-034.

\bibliographystyle{alphaUrlePrint}
\bibliography{qc}

\end{document}